\tikzset{
->, 
node distance=3cm, 
every state/.style={thick, fill=gray!10}, 
initial text=$ $, 
}
\theoremstyle{definition}
\newtheorem{axiom}{Axiom}
\newcommand{\setArg}[2]{\left\{ {#1} \,\,\middle|\,\, {#2} \right\}}
\newcommand{\cont}{\mathcal{C}}
\renewcommand{\paragraph}[1]{\noindent\textbf{#1}}
\newcommand{\balg}[0]{\mathbf{B}}
\newcommand{\calg}[0]{\mathbb{C}}
\begin{document}

\title{Composition and Merging of Assume-Guarantee Contracts Are Tensor Products}

\titlerunning{Tensor Products for Contracts}

\author{Inigo Incer\,\orcidlink{0000-0001-7933-692X}
}
\authorrunning{Inigo Incer}
%
\institute{California Institute of Technology, Pasadena, CA, USA
}

\maketitle

\begin{abstract}
We show that the operations of composition and merging of contracts are part of the tensor product structure of the algebra of contracts.
\end{abstract}

\section{Introduction}
\label{jkhbgbf}
In the 2000s, Benveniste et al.~\cite{multViewpoint} proposed contracts to provide semantics to the component-based design methodologies common in the system industries. Instead of using detailed mathematical models to represent components, contract-based design can yield insight about a system using the assume-guarantee specifications of its components. The basic ingredients of analysis are the contracts and the algebraic operations on these objects.

The algebraic operations on contracts were discovered piecemeal.
The original paper on the topic~\cite{multViewpoint} discusses conjunction, disjunction, and composition.
Conjunction, which is the greatest lower bound induced from the contract order, was proposed a way to bring various aspects, dubbed \emph{viewpoints}, into the same contract. Disjunction was induced from the order on contracts as the least upper bound; its use in system design was not known for several years, until it was pointed out 
that its role is in product lines---see \cite{Incer:EECS-2022-99}, page 21. This shows that logical considerations can sometimes be ahead of our practical understanding.
Composition was defined as the operation that yields the specification of a system from the specifications of components being interconnected. The composition formula was constructed from intuitive considerations and is reminiscent of the composition axiom proposed by M. Abadi and L. Lamport in the early 90s \cite{AbadiLamport93TPLS}. 
A fourth binary operation, merging, was proposed later \cite{contractMerging} on axiomatic grounds to fuse into a single contract multiple viewpoints that we want to hold simultaneously for a component. This shows that intuition can also be ahead of theory.

For each of the four basic operations just described, a new operation was discovered that optimally undoes it. These are dubbed \emph{adjoint} operations per their role in category theory---see Section 6.8 of \cite{Incer:EECS-2022-99}. For instance, composition builds a system from its constituent elements; its adjoint, quotient, finds the optimal specifications of objects that need to be added to a system to make it satisfy a goal. The following diagram provided in \cite{Incer:EECS-2022-99} shows how the various contract operations relate to each other.

\noindent
\scalebox{0.91}
{\scriptsize
$$
{
{
    \begin{tikzcd}[row sep = 0.1em, column sep = tiny,ampersand replacement=\&]
    \& \text{Conjunction } \land \arrow[ddd, "\text{Dual}\,\,"', Leftrightarrow] 
    \arrow[rrr, "\text{Right adjoint}", Rightarrow]
    \& [2.9em] \& \& \text{{{Implication}} } \rightarrow \\ 
    \text{Order} \arrow[ur, Rightarrow] \arrow[ddr, Rightarrow] \& \& \\
    \\
    \& \text{Disjunction } \lor \arrow[rrr, "\text{Left adjoint}", Rightarrow] \& \& \&
    \text{{{Coimplication}} } \nrightarrow \arrow[uuu, "\,\,\text{Dual}"', Leftrightarrow] 
    \end{tikzcd}
    \begin{tikzcd}[row sep = 0.1em, column sep = tiny,ampersand replacement=\&]
    \& \text{Composition} \parallel \text{\;}\arrow[ddd, "\text{Dual}\,\,"', Leftrightarrow] 
    \arrow[rrr, "\text{Right adjoint}", Rightarrow]
    \& [1.5em] \& \& \text{ {{Quotient}} } / \\ 
    \text{Axiom} \arrow[ur, Rightarrow] \arrow[ddr, Rightarrow] \& \& \\
    \\
    \& \text{{{Merging}} } \bullet \arrow[rrr, "\text{Left adjoint}", Rightarrow] \& \& \&
    \text{{{Separation}} } \div \arrow[uuu, "\,\,\text{Dual}"', Leftrightarrow]
    \end{tikzcd}
}}$$}

\vspace{1mm}

This diagram shows that two operations are obtained from the notion of order, and two by axiom: composition and merging. The operations of conjunction, disjunction, and its adjoints provide contracts with significant algebraic structure \cite{incer2023adjunction}. Given the practical, widespread use of merging and composition, we wonder whether these operations have an algebraic interpretation, as well.

\textbf{Contributions.} The purpose of this note is to study the algebraic roles of the operations of merging and composition in the theory of contracts. It is shown that these operations can be understood as tensor products. This interpretation enables the identification of these operations without an appeal to axiom.

\textbf{Outline.} Section~\ref{kjcnhkd} contains basic definitions about contracts. This paper makes no contributions in this section.
Section~\ref{kljqcnhkgd} investigates the tensor product of contracts. Except where explicitly indicated, this section contains new material. We discuss these results in Section \ref{knjgbksjgbj}.

\section{Assume-guarantee contracts}
\label{kjcnhkd}

To define contracts, we assume we have access to a language used to express properties. We will require this language to have Boolean semantics (i.e., Boolean connectives plus the excluded middle), and as a result, we shall refer to it as $\balg$. The notation in this section follows closely ~\cite{Incer:EECS-2022-99} and~\cite{incer2023algebraic}. 

\begin{definition}
Given a Boolean algebra $B$, its contract algebra $\calg(\balg)$ is given by
\begin{equation*}
\label{knxqkjbq}
\calg(\balg) = \setArg{(a,g) \in \balg^{\text{\normalfont op}} \times \balg}{a \lor g = 1_\balg}.
\end{equation*}
\end{definition}

Some observations about this definition:
\begin{itemize}
\item A contract is a pair $(a,g)$ of \emph{assumptions} and \emph{guarantees}. The semantics of a contract, and the way the contract operations were originally defined, was by understanding a contract as a specification that requires a component to satisfy the contract's guarantees $g$ when the environment in which it operates satisfies the contract's assumptions $a$.
\item The condition $a \lor g = 1_{\balg}$ in the above definition is there to give to the pairs $(a,g)$ the semantics of assume-guarantee reasoning. Note that if the assumptions $a$ do not hold, $g$ must be asserted; in other words, this condition says that if the assumptions do not hold, the guarantees hold ipso facto.
\item $\balg^{\text{op}}$ means the opposite of $\balg$, the algebra in which $\land$ has been switched with $\lor$, and the top with the bottom elements. This notation is there to embed in the definition \eqref{knxqkjbq} of the contract algebra of $\balg$ the following partial order:
$$
(a,g) \le (a', g') \text{ iff }
g \le g' \text{ and }
a' \le a.
$$
Observe this definition matches intuition. It says that a contract $(a,g)$ is stricter---or a \emph{refinement}---of contract $(a',g')$ if the guarantees of the stricter contract are stricter than those of the more relaxed contract, and the assumptions the stricter contract makes on its environment are more relaxed than those that the more relaxed contract makes. 
\end{itemize}

$\calg(\balg)$ is a bounded partial order:

\begin{corollary}\label{jbhgsgqjk}
The contracts $0 \coloneq (1_\balg, 0_\balg)$ and $1 \coloneq (0_\balg, 1_\balg)$ are, respectively, the bottom and top of $\calg(\balg)$.
\end{corollary}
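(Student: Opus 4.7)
The plan is to unpack the definition of the partial order on $\calg(\balg)$ and verify that the two nominated pairs are (i) actually contracts and (ii) respectively below and above every other contract.

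First I would check membership in $\calg(\balg)$: for $(1_\balg, 0_\balg)$ the defining condition reads $1_\balg \lor 0_\balg = 1_\balg$, which holds, and for $(0_\balg, 1_\balg)$ it reads $0_\balg \lor 1_\balg = 1_\balg$, which also holds. So both are bona fide contracts.

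Next, fix an arbitrary $(a,g) \in \calg(\balg)$. To show $(1_\balg, 0_\balg) \le (a,g)$, I would apply the order defined in the excerpt, $(a,g) \le (a',g')$ iff $g \le g'$ and $a' \le a$. Here this reduces to the two inequalities $0_\balg \le g$ and $a \le 1_\balg$, both of which follow immediately from $0_\balg$ being the bottom and $1_\balg$ the top of $\balg$. Symmetrically, $(a,g) \le (0_\balg, 1_\balg)$ reduces to $g \le 1_\balg$ and $0_\balg \le a$, which likewise hold.

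There is no real obstacle here; the only point that one must be a bit careful about is the direction in which assumptions are compared, which is why the definition uses $\balg^{\text{op}}$ on the first coordinate. Once that is tracked correctly, the proof is essentially a one-line observation that the order on $\calg(\balg)$ inherits its extremes from the extremes of $\balg$ (with the assumption coordinate reversed). I would therefore present the argument as two short verifications, one for each extremum, and leave the membership check as a parenthetical remark.
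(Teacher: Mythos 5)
Your proof is correct and matches the paper's intent: the paper states this as an immediate corollary of the definition of the order on $\calg(\balg)$ and gives no explicit argument, and your two verifications (membership via $a \lor g = 1_\balg$, then the componentwise inequalities with the assumption coordinate reversed) are exactly the routine unpacking that justifies it.
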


\subsection{Conjunction and disjunction}

The definition of contracts, with its embedded notion of order, immediately generates two binary operations: conjunction and disjunction, defined as the greatest lower bound and least upper bound for arbitrary pairs of contracts, respectively. Note that these two operations come perforce from the definition of contracts.

\begin{proposition}
Let $(a,g), (a',g') \in \calg(\balg)$. The greatest lower bound (or conjunction) and the least upper bound (or disjunction) of these contracts are, respectively, given by
\begin{equation*}
(a,g) \land (a',g') = (a \lor a', g\land g') \text{ and }
(a,g) \lor (a',g') = (a \land a', g\lor g').
\end{equation*}
\end{proposition}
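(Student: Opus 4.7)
The plan is a routine verification, broken into three parts, once for each candidate operation: first check that the proposed pair is indeed an element of $\calg(\balg)$, then that it is a bound, and finally that it is the greatest (respectively least) such bound in the partial order on $\calg(\balg)$ given just before the statement.

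First, I would verify that $(a \lor a', g \land g')$ lies in $\calg(\balg)$, i.e.\ that $(a\lor a') \lor (g \land g') = 1_\balg$. Distributing gives $(a \lor a' \lor g) \land (a \lor a' \lor g')$, and each conjunct equals $1_\balg$ because $a \lor g = 1_\balg$ and $a' \lor g' = 1_\balg$ by definition of contracts. An analogous calculation handles $(a \land a', g \lor g')$: the expression $(a \land a') \lor g \lor g'$ dominates $(a \lor g) \land (a' \lor g') = 1_\balg$ by distributivity and monotonicity.

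Next, I would unfold the order $(x,y)\le(x',y') \iff y \le y'$ and $x' \le x$. For conjunction, $g \land g' \le g$ and $a \le a \lor a'$ show $(a\lor a', g \land g') \le (a,g)$, and symmetrically for $(a',g')$. For the universal property, if $(a'',g'')\le (a,g)$ and $(a'',g'')\le (a',g')$, then $g''\le g$ and $g''\le g'$ give $g''\le g\land g'$, while $a\le a''$ and $a'\le a''$ give $a\lor a'\le a''$; hence $(a'',g'')\le(a\lor a', g\land g')$. The disjunction case is dual: $(a,g)\le(a\land a', g\lor g')$ because $g\le g\lor g'$ and $a\land a'\le a$, and any common upper bound $(a'',g'')$ satisfies $g\lor g'\le g''$ and $a''\le a\land a'$.

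No step is particularly hard; the only place one must be careful is confirming closure under the contract condition, since the candidate pairs mix $\land$ and $\lor$ and it is there that one uses both defining identities $a\lor g = 1_\balg$ and $a'\lor g' = 1_\balg$ simultaneously. Everything else follows from the definition of the order together with the universal properties of $\land$ and $\lor$ in $\balg$ itself, and in fact once the conjunction case is written out the disjunction case can simply be invoked by the duality $\balg \leftrightarrow \balg^{\mathrm{op}}$ reflected in the definition of $\calg(\balg)$.
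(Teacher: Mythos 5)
Your proposal is correct and follows essentially the same route as the paper: unfold the componentwise order and observe that $(a'',g'')$ is a common lower (resp.\ upper) bound iff $(a'',g'') \le (a\lor a', g\land g')$ (resp.\ $\ge (a\land a', g\lor g')$). The only difference is that you additionally verify the candidate pairs satisfy $a\lor g = 1_\balg$ and hence lie in $\calg(\balg)$ — a closure check the paper's proof leaves implicit, and which is indeed needed for the candidates to qualify as bounds at all.
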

\begin{proof}
$(a'',g'') \le (a,g)$ and $(a'',g'') \le (a',g')$
means that
$g'' \le g$ and $a \le a''$ and $g'' \le g'$ and $a' \le a''$.
This is equivalent to
$g'' \le g \land g'$ and $a \lor a' \le a''$, which means that
$(a'',g'') \le (a \lor a', g \land g')$. This proves the expression for conjunction. A similar procedure leads to the formula for disjunction.
\end{proof}

\subsection{Axioms for composition and merging}

Two types of models of $\balg$ are associated with contracts over $\balg$. They go by the names of environments and implementations.
\begin{definition}
    An \emph{environment} of a contract $(a,g) \in \calg(\balg)$ is a model of $a$, and an \emph{implementation} of this contract is a model of $g$.
\end{definition}

Given contracts $\cont$ and $\cont'$, two operations are defined by axiom:

\begin{axiom}
The \emph{merger} of contracts $\cont$ and $\cont'$, denoted $\cont \bullet \cont'$, is the most relaxed contract $\cont_m$ satisfying these two requirements:
\begin{enumerate}
    \item An implementation of $\cont_m$ is an implementation of both $\cont$ and $\cont'$.
    \item An environment of $\cont$ that is also an environment of $\cont'$ is an environment of $\cont_m$.
\end{enumerate}
\end{axiom}

To define the axiom for the composition of contracts, we assume that models of $\balg$ have a well-defined composition operation. For details, we refer the reader to Chapter 5 of \cite{BenvenisteContractBook} and \cite{hypercontractsNasa}.

\begin{axiom}
The \emph{composition} of contracts $\cont$ and $\cont'$, denoted $\cont \parallel \cont'$, is the most refined contract $\cont_c$ satisfying these two requirements:
\begin{enumerate}
    \item The composition of an implementation of $\cont$ with an implementation of $\cont'$ yields an implementation of $\cont_c$.
    \item The composition of an environment of $\cont_c$ with an implementation of one of the contracts being composed is an environment of the other contract being composed.
\end{enumerate}
\end{axiom}

The application of these two axioms leads to the following closed form expressions for composition and merging:

\begin{corollary}
Given $(a,g), (a',g') \in \calg(\balg)$,
the closed form expressions to compute merging and composition are
\begin{align}
\label{bkjfgbkjf}
(a,g) \bullet (a',g') &= \left( a \land a', (a \land a') \to (g \land g') \right) \text{ and}\\
\label{vdhqgngbq}
(a,g) \parallel (a',g') &= \left( (g \land g') \to (a \land a'), g \land g'  \right).
\end{align}
\end{corollary}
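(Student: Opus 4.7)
The plan is to verify each formula against its defining axiom in three stages: check that the candidate lies in $\calg(\balg)$, verify both clauses of the axiom, and establish extremality in the contract order (most relaxed for the merger, most refined for the composition).

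For the merger, set $\cont_m \coloneq (a_m, g_m)$ with $a_m = a \land a'$ and $g_m = (a \land a') \to (g \land g')$. Saturation is immediate: $a_m \lor g_m = (a \land a') \lor \neg(a \land a') \lor (g \land g') = 1_\balg$. The key algebraic identity powering the axiom verification is $a_m \land g_m = (a \land a') \land (g \land g')$, an instance of $x \land (x \to y) = x \land y$; it equates the joint constraint imposed by $\cont_m$ with the meet of the joint constraints imposed by $\cont$ and $\cont'$, which witnesses clause 1. Clause 2 reduces to $a \land a' \le a_m$, satisfied with equality. For extremality, let $(a'', g'') \in \calg(\balg)$ be any contract satisfying both clauses. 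Clause 1 gives $a'' \land g'' \le g \land g'$, hence by Boolean residuation $g'' \le a'' \to (g \land g')$; clause 2 gives $a \land a' \le a''$, so contravariance of $\to$ in its antecedent yields $a'' \to (g \land g') \le (a \land a') \to (g \land g') = g_m$. Combined with $a_m = a \land a' \le a''$, this gives $(a'', g'') \le \cont_m$ in the contract order.

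For composition, set $\cont_c \coloneq ((g \land g') \to (a \land a'), g \land g')$; the argument is dual. Saturation is again immediate, and the same identity $a_c \land g_c = (a \land a') \land (g \land g')$ witnesses both axiom clauses: the composition of implementations of $\cont$ and $\cont'$, taken as Boolean meet, yields an implementation of $\cont_c$, and the environment clause follows by a parallel residuation step. Extremality is the dual statement $\cont_c \le (a'', g'')$ for any axiom-satisfying $(a'', g'')$: clause 1 gives $g_c = g \land g' \le g''$, and residuation applied to the environment clause gives $a'' \le (g \land g') \to (a \land a') = a_c$.

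The principal technical obstacle is translating the informal axiomatic clauses---phrased in terms of implementations, environments, and composition of models---into precise inequalities in $\balg$. Once the composition of models is fixed as the Boolean meet $\land$, every step reduces to elementary manipulations via De Morgan, distributivity, and the residuation identity $x \land y \le z \iff y \le x \to z$.
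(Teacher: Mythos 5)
The paper states this corollary without proof (it is presented as a known consequence of the axioms, with the derivations living in the cited literature), so there is no in-paper argument to compare against; your attempt must be judged on its own terms. Your treatment of composition \eqref{vdhqgngbq} is essentially sound once model composition is fixed to be the Boolean meet: clause 1 literally reads $g \land g' \le g_c$, clause 2 reads $a_c \land g \le a'$ and $a_c \land g' \le a$, and the candidate satisfies both because saturation gives $\neg g \le a$ and $\neg g' \le a'$; extremality then follows from the identity $(g \to a') \land (g' \to a) = (g \land g') \to (a \land a')$, valid for saturated pairs, which you use implicitly and should state.

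The merging half has a genuine gap. Under the paper's definitions, an implementation of $\cont_m$ is a model of $g_m$, so clause 1 of the merging axiom says literally that $g_m \le g \land g'$. Your candidate has $g_m = (a \land a') \to (g \land g') = \neg(a \land a') \lor (g \land g')$, which is \emph{not} below $g \land g'$ in general: any model of $\neg a'$ is an implementation of $\cont_m$ but need not be an implementation of $\cont'$ (take $a = 1$, $g = g_0$, $a' = 0$, $g' = 1$ with $0 \ne g_0 \ne 1$; then $g_m = 1 \not\le g_0 \land 1$). Your substitute condition $a_m \land g_m = (a \land a') \land (g \land g')$, the ``joint constraint'' reading, is a strictly weaker statement than clause 1 and is not what the axiom asserts; it amounts to judging implementations of $\cont_m$ only in environments of $\cont_m$. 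Under the literal reading, the most relaxed contract satisfying both clauses is in fact $\left( (a \land a') \lor \neg(g \land g'),\; g \land g' \right)$, which refines but generally differs from \eqref{bkjfgbkjf}, so the corollary would be false as you have set it up. To close the gap you must either (i) adopt and justify the relativized form of clause 1 (an implementation of $\cont_m$ \emph{operating in an environment of $\cont_m$} implements both $\cont$ and $\cont'$), which is how the literature phrases the axiom, or (ii) compute the optimum over unsaturated pairs, obtaining $(a \land a', g \land g')$, and then argue that passing to the saturated representative $(a \land a', (a \land a') \to (g \land g'))$ is legitimate. Your extremality argument for merging is unaffected (the literal clause 1 gives the stronger $g'' \le g \land g'$, which still yields $g'' \le g_m$); the defect is solely in verifying that the candidate itself satisfies the axiom.
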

Here, $\to$ stands for the implication operator of Boolean algebras.
Observe the difference in the way we arrived at the expressions to compute conjunction/disjunction and composition/merging. The former were obtained by algebraic means, i.e., as the LUB and GLB of the refinement order, while the latter were obtained by axiom. The purpose of this note is to understand the role that merging and composition play in the algebra of contracts. In particular, we would like to understand whether their closed form expressions can be obtained by algebraic means. It turns out they are related to tensor products.

\section{Tensor products of contracts}
\label{kljqcnhkgd}

In this section, we study a notion of tensor product of contracts with the tools so far developed. First, we discuss the notion of a tensor product over bimodules of monoids. Then we apply these notions to contracts.

\subsection{Tensor products of bimodules over monoids}

Tensor products are normally defined over ring modules. We will need to modify standard definitions slightly to obtain tensor products over structures with actions of monoids. A comprehensive reference on tensor products is Chapter 11 of \cite{h113}.

We recall that a monoid $(B, \cdot, 1_B)$ is a semigroup $(B,\cdot)$ with an identity element $1_B$. Let $(M, +, 0_M)$ be a commutative monoid, i.e., a monoid with a commutative operation $+$.
We say that $M$ is a $B$-bimodule if it is equipped with operations $\cdot \colon M, B \to M$ and $\cdot\colon B, M \to M$ satisfying the following properties for all $x,y \in B$ and $m,m' \in M$:
\begin{itemize}
    \item $1_B \cdot m = m = m \cdot 1_B$
    \item $(xy) \cdot m = x \cdot (y\cdot m)$
    \item $ m\cdot (xy) = (m \cdot x) \cdot y$
    \item $(x \cdot m) \cdot y = x \cdot (m \cdot y)$
    \item $x \cdot (m + m') \cdot y = (x \cdot m \cdot y) + (x \cdot m' \cdot y)$
\end{itemize}

The operation that applies an element of $B$ on the left is called a \emph{left action}, and the operation that applies it on the right a \emph{right action}.

Suppose that $M$, $N$, and $O$ are $B$-bimodules. A $B$-linear map $f\colon M \to O$ is a map satisfying the following identities for all $m,m' \in M$ and $x,y \in B$:
\begin{itemize}
    \item $f(m + m') = f(m) + f(m')$
    \item $x \cdot f(m) \cdot y = f(x \cdot m \cdot y)$
\end{itemize}
A $B$-bilinear map $f\colon M, N \to O$ is a map satisfying the following identities for all $m,m' \in M$, $n,n'\in N$, and $x,y \in B$:
\begin{itemize}
    \item $f(m + m', n) = f(m, n) + f(m', n)$
    \item $f(m, n+n') = f(m, n) + f(m, n')$
    \item $x \cdot f(m, n) \cdot y = f(x \cdot m \cdot y, n) = f(m, x \cdot n \cdot y)$
\end{itemize}

A tensor product of $B$-bimodules $M$ and $N$ is a $B$-bimodule $M \otimes N$ together with a $B$-bilinear map $\tau\colon M,N \to M \otimes N$ satisfying the following property: for every $B$-bilinear map $f \colon M,N \to O$, there is a unique $B$-linear map $M\otimes N \to O$ making the following diagram commute:
$$
\begin{tikzcd}
M,N \arrow[r, "f"] \arrow[d, "\tau"] & O \\
M \otimes N \arrow[ur, ""', dashed]
\end{tikzcd}
$$

\subsection{The tensor product of conjunctive monoids of contracts}
\label{jbacgja}

\newcommand{\andmon}{\calg_\land^M}
\newcommand{\ormon}{\calg_\lor^M}
\newcommand{\parmon}{\calg_\parallel^M}
\newcommand{\mermon}{\calg_\bullet^M}

\newcommand{\boolandmon}[0]{\mathbf{M}_{\land}}
\newcommand{\boolormon}[0]{\mathbf{M}_{\lor}}

\newcommand{\boolcat}[0]{\textbf{Bool}}
\newcommand{\moncat}[0]{\textbf{Mon}}

Let $\boolcat$ be the category of Boolean algebras, and $\moncat$ that of monoids. We consider the
functors $\boolandmon, \andmon \colon \boolcat \to \moncat$ defined as follows:
\begin{align*}
    &\balg \xmapsto{\boolandmon} \left( |\balg|, \land, 1_\balg \right) 
    &&f \xmapsto{\boolandmon} f \quad \text{and} \\
    &\balg \xmapsto{\andmon} \left( |\calg(\balg)|, \land, 1 \right)
    &&f \xmapsto{\andmon} f\times f,
\end{align*}
where $\balg$ and $f$ are, respectively, an object and a morphism in $\boolcat$. In other words, $\boolandmon(\balg)$ is the monoid obtained by only remembering the conjunction operation and the distinguished element $1_\balg$ of the Boolean algebra $\balg$; similarly, $\andmon(\balg)$ is the monoid obtained by only remembering the conjunction operation and the distinguished element $1$ of the contract algebra $\calg(\balg)$---see Corollary~\ref{jbhgsgqjk}. Contract monoids are studied in Section 6.10.1 of \cite{Incer:EECS-2022-99}.
For ease of syntax, throughout this paper we will say $\balg$-bimodules and $\balg$-linear/bilinear maps instead of $\boolandmon(\balg)$-bimodules, etc.

We now turn the monoid $\andmon(\balg)$ into a $\balg$-bimodule through the following operations:
\begin{align*}
&x \cdot (a, g) \coloneq ((x \land a), x \to g) && \text{and}\\
&(a,g) \cdot x \coloneq (a, a \to (x \land g)) && \left(x \in \balg, (a,g) \in \calg(\balg)\right).
\end{align*}

This left action was defined in Section 6.11 of \cite{Incer:EECS-2022-99}. The right action is, to the best of our knowledge, new.
First, we observe that $(x \land a) \lor (x \to g) = 1_\balg = a \lor \left( a \to (x \land g) \right)$. This means that $x \cdot (a, g)$ and $(a, g) \cdot x$ are elements of $\calg(\balg)$. Second, the operations just defined have the following interpretation:
given a contract $\cont \in \calg(\balg)$, the left operation $x \cdot \cont$ has the effect of adding to the contract $\cont$ an additional assumption $x$. 
Similarly, the contract $\cont \cdot x$ has the same assumptions as $\cont$, but it has additional guarantees $x$. In other words, $x\cdot \cont$ adds $x$ to the assumptions of $\cont$, and $\cont \cdot x$ adds $x$ to its guarantees.

\begin{proposition}\label{kwjbfgc}
Under the left and right operations just defined, $\andmon(\balg)$ is a $\balg$-bimodule.
\end{proposition}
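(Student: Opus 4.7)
The plan is to verify the five bimodule axioms by direct calculation, starting from the definitions
$x \cdot (a,g) = (x \land a,\, x \to g)$ and $(a,g) \cdot x = (a,\, a \to (x \land g))$,
together with the monoid operation $(a,g) \land (a',g') = (a \lor a',\, g \land g')$. The silent workhorse throughout is the contract axiom $a \lor g = 1_\balg$, equivalently $\neg a \leq g$, which will be invoked at precisely two places.

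The identity axiom is essentially immediate: $1_\balg \cdot (a,g) = (a,\, 1_\balg \to g) = (a,g)$, while $(a,g) \cdot 1_\balg = (a,\, a \to g)$ collapses to $(a,g)$ because $a \to g = \neg a \lor g = g$ via $\neg a \leq g$. For the two associativity identities and the mixed identity, I would rewrite every implication in disjunctive form $c \to d = \neg c \lor d$ and simplify. Concretely, $(xy) \cdot (a,g) = x \cdot (y \cdot (a,g))$ reduces to $(xy) \to g = x \to (y \to g)$; for $(a,g) \cdot (xy) = ((a,g) \cdot x) \cdot y$ one simplifies $\neg a \lor (y \land (\neg a \lor (x \land g)))$ by distributivity and absorption to $\neg a \lor (xy \land g)$; and for $(x \cdot (a,g)) \cdot y = x \cdot ((a,g) \cdot y)$ both guarantee components reduce to the common expression $\neg x \lor \neg a \lor (y \land g)$.

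The real work is the bilinearity identity. Acting by $x$ and $y$ after conjoining yields
$$\bigl(x \land (a \lor a'),\; (x \land (a \lor a')) \to (y \land (x \to (g \land g')))\bigr),$$
while conjoining after acting on each summand gives
$$\bigl(x \land (a \lor a'),\; ((x \land a) \to (y \land (x \to g))) \land ((x \land a') \to (y \land (x \to g')))\bigr).$$
The assumption components agree on the nose. Rewriting both guarantees in disjunctive form, equality reduces to
$$(\neg a \lor (y \land g)) \land (\neg a' \lor (y \land g')) = (\neg a \land \neg a') \lor (y \land g \land g').$$
Distributing the left-hand side produces two cross terms, $\neg a \land y \land g'$ and $y \land g \land \neg a'$. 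Using $\neg a \leq g$ and $\neg a' \leq g'$, each cross term is dominated by $y \land g \land g'$ and is therefore absorbed, leaving exactly the two joinands on the right.

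The main obstacle is this absorption step in the bilinearity axiom; it is the only place where the proof genuinely uses that $(a,g)$ and $(a',g')$ are contracts rather than arbitrary pairs in $\balg^{\mathrm{op}} \times \balg$, and it explains why the construction yields a bimodule over the contract algebra specifically. The other four axioms are routine Boolean algebra. Finally, a brief sanity check that $x \cdot (a,g)$ and $(a,g) \cdot x$ lie in $\calg(\balg)$ is already observed in the paragraph preceding the proposition, so the actions are well defined and the computations above take place in $\andmon(\balg)$ as required.
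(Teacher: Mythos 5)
Your proposal is correct and follows essentially the same route as the paper: a direct verification of the five bimodule axioms, with the contract condition $a \lor g = 1_\balg$ doing the real work in the right-identity and bilinearity checks. The only cosmetic difference is that where you expand the implications into disjunctive form and absorb the cross terms via $\neg a \le g$ and $\neg a' \le g'$, the paper packages the same fact as the identity $(a \lor a') \to (g \land g') = g \land g' = (a \to g) \land (a' \to g')$ (and cites an external proposition for the left-action associativity you prove by currying).
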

\begin{proof}
Let $x,y \in \balg$ and $(a,g), (a',g') \in \calg(\balg)$. We verify the conditions of a bimodule stated in Section \ref{jbacgja}:
\begin{itemize}
\item $1_\balg \cdot (a,g) = ((1_\balg \land a), 1_\balg\to g) = (a,g) = (a, a \to (1_\balg \land g)) = (a,g) \cdot 1_\balg$.
\item $(x\land y) \cdot (a,g) = x\cdot \left( y \cdot (a,g)\right)$, as show in the proof of Proposition 6.11.1 of \cite{Incer:EECS-2022-99}.
\item $\begin{aligned}[t]
(a,g) \cdot (x\land y) & = \left( a, a \to (g \land (x \land y)) \right) \\ &=
\left( a, a \to ((g \land x) \land y) \right) \\ &=
\left( a, a \to (g \land x) \right) \cdot y \\ &=
\left((a,g) \cdot x\right) \cdot y.
\end{aligned}
$
\item $
\begin{aligned}[t]
\left(x \cdot (a,g)\right) \cdot y & =
\left( (x \land a), x \to g \right) \cdot y \\ &=
\left( (x \land a), (x \land a) \to \left(y \land (x \to g)\right) \right) \\ &=
\left( (x \land a), (x \land a) \to \left(y \land g\right) \right) \\ &=
\left( (x \land a), x \to \left(a \to \left(y \land g\right)\right) \right) \\ &=
x \cdot \left( a, a \to \left(y \land g\right) \right) \\ &=
x\cdot \left( (a,g) \cdot y\right).
\end{aligned}
$
\item We will use the following identity:
\begin{equation}\label{jkwbhfgjk}
(a \lor a') \to (g \land g') = g \land g' = (a \to g) \land (a' \to g').
\end{equation}
We can now verify the property:
\begin{align*}
x \cdot & \left( (a,g) \land (a',g') \right) \cdot y =
x \cdot \left( (a \lor a',g \land g') \right) \cdot y \\&=
x \cdot \left( a \lor a', (a \lor a') \to \left( y \land g \land g' \right) \right) \\ &=
\left( (x\land a) \lor (x \land a'), x \to \left( (a \lor a') \to \left( y \land g \land g' \right) \right) \right) \\ &\stackrel{\eqref{jkwbhfgjk}}{=}
\left( (x\land a) \lor (x \land a'), x\to
\left(
\begin{aligned}
& \left( a \to ( y \land g ) \right) \land \\ &
\left( a' \to ( y \land g' ) \right)
\end{aligned}
\right)
\right) \\ &=
\left( (x\land a) \lor (x \land a'), 
\left(
\begin{aligned}
& \left( (x\land a) \to ( y \land g ) \right) \land \\ &
\left( (x\land a') \to ( y \land g' ) \right)
\end{aligned}
\right)
\right) \\ &=
\left( x\land a, 
(x\land a) \to ( y \land g )
\right) \land
\left( x\land a', 
(x\land a') \to ( y \land g' )
\right)
\\ &=
(x \cdot (a,g) \cdot y) \land (x \cdot (a',g') \cdot y).
\qedhere
\end{align*}
\end{itemize}
\end{proof}

\begin{theorem}
$\andmon(\balg)$ together with the map $(\cdot)\bullet(\cdot)\colon \andmon(\balg), \andmon(\balg) \to \andmon(\balg)$ given by \eqref{bkjfgbkjf} is the tensor product $\andmon(\balg) \otimes \andmon(\balg)$.
\end{theorem}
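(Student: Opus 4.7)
The goal is to establish the universal property: for every $\balg$-bilinear $f\colon \andmon(\balg), \andmon(\balg) \to O$ there exists a unique $\balg$-linear $\tilde f\colon \andmon(\balg) \to O$ factoring $f$ through $\bullet$. I would begin by verifying that $\bullet$ is itself $\balg$-bilinear, which is a routine calculation from the closed form \eqref{bkjfgbkjf}, entirely parallel to Proposition~\ref{kwjbfgc}. Two preliminary observations then do the real work. First, the contract $u \coloneq (1_\balg, 1_\balg)$ is a two-sided unit for $\bullet$, so in particular $u \bullet u = u$. Second, every contract admits the decomposition $(a, g) = a \cdot u \cdot g$: a direct computation gives $a \cdot u \cdot g = (a, a \to g)$, and the contract condition $g \geq \neg a$ forces $a \to g = g$.

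Now set $m \coloneq f(u, u)$ and define $\tilde f(X) \coloneq f(X, u)$. Linearity of $\tilde f$ is immediate from the bilinearity of $f$ with the second slot frozen at $u$. Using the decomposition $(a, g) = a \cdot u \cdot g$ and pulling the scalars out of each slot via the action-compatibility clause of bilinearity, I would derive the closed forms $f(C, D) = (a \land a') \cdot m \cdot (g \land g')$ and $\tilde f(C \bullet D) = (a \land a') \cdot m \cdot ((a \land a') \to (g \land g'))$. Verifying $\tilde f \circ \bullet = f$ thus reduces to the identity $A \cdot m \cdot H = A \cdot m \cdot (A \to H)$ for all $A, H \in \balg$, with $A \coloneq a \land a'$ and $H \coloneq g \land g'$.

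This identity is the main obstacle: it does not hold for an arbitrary element of a generic $\balg$-bimodule and must be deduced from how $m$ arises. The trick I plan to use is the observation that in $\andmon(\balg)$ one has $A \cdot u = (A, 1_\balg) = A \cdot u \cdot A$ as elements, so applying the $\balg$-linear map $\tilde f$ yields $A \cdot m = A \cdot m \cdot A$ in $O$. Combined with the bimodule associativity axiom $(p \cdot x) \cdot y = p \cdot (x \land y)$, this gives $A \cdot m \cdot h = A \cdot m \cdot (A \land h)$ for every $h \in \balg$; applying this with $h = H$ and with $h = A \to H$ collapses both sides of the target identity to the common value $A \cdot m \cdot (A \land H)$, completing the factorization. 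Uniqueness is then automatic: any $\balg$-linear $\tilde f'$ factoring $f$ through $\bullet$ must send $u = u \bullet u$ to $m$, and then $\balg$-linearity forces $\tilde f'((a, g)) = \tilde f'(a \cdot u \cdot g) = a \cdot m \cdot g = \tilde f((a, g))$.
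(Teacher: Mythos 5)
Your proposal is correct, and it reaches the universal property by a genuinely different route than the paper. The paper also uses the unit $e=(1_\balg,1_\balg)$ and the decomposition $(a,g)=a\cdot e\cdot g$, but its key move is to transfer the scalars \emph{across the two slots} of $f$ via the bilinearity clause $f(x\cdot m\cdot y,\,n)=f(m,\,x\cdot n\cdot y)$, and then to observe that the transferred action lands exactly on the merge: $a\cdot(a',g')\cdot g=(a,g)\bullet(a',g')$ inside $\andmon(\balg)$. With that single identity the factorization $f((a,g),(a',g'))=\hat f((a,g)\bullet(a',g'))$ is immediate, and no identity in the target bimodule is ever needed; uniqueness follows at once because $e$ is a unit for $\bullet$, so every contract is already of the form $\cont\bullet e$. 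You instead pull all scalars out into the target $O$, reducing both $f(C,D)$ and $\tilde f(C\bullet D)$ to the normal forms $A\cdot m\cdot H$ and $A\cdot m\cdot(A\to H)$ with $m=f(u,u)$, and then must prove the absorption identity $A\cdot m=A\cdot m\cdot A$ in $O$. You correctly flag that this cannot be assumed and derive it by pushing the source-level relation $A\cdot u=A\cdot u\cdot A$ (which holds in $\andmon(\balg)$ because $A\to 1_\balg=1_\balg$) through the linear map $\tilde f$; combined with $(p\cdot x)\cdot y=p\cdot(x\land y)$ this collapses both normal forms to $A\cdot m\cdot(A\land H)$. Your version is somewhat longer, but it makes explicit exactly where the contract condition $a\lor g=1_\balg$ (equivalently $a\to g=g$) enters and why it does not obstruct the factorization, which the paper's slicker slot-transfer argument leaves implicit. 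Both arguments are complete; your uniqueness step could be shortened by noting, as the paper does, that $u$ is a unit for $\bullet$, so the factoring map is determined on all of $\andmon(\balg)$ without invoking linearity.
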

\begin{proof}
We know from Proposition \ref{kwjbfgc} that $\andmon(\balg)$ is a $\balg$-bimodule.
First we verify that $(\cdot)\bullet(\cdot)$ is a $\balg$-bilinear map. Observe that this operation is commutative, so we only have to check additivity/linearity for one argument. We carry out the following verifications:
\begin{itemize}
\item Let $\cont, \cont', \cont'' \in \andmon(\balg)$. Then $(\cont \land \cont') \bullet \cont'' = (\cont \bullet \cont'') \land (\cont' \bullet \cont'')$ from Proposition 6.10.4 of \cite{Incer:EECS-2022-99}.
\item Let $x,y \in \balg$ and $(a,g), (a',g') \in \andmon(\balg)$. We have
\begin{align*}
x \cdot \left( (a,g) \bullet (a',g') \right) \cdot y &= 
\left( x\land a\land a', (x\land a\land a') \to (y\land g\land g') \right) \\ &=
\left( x\land a, (x\land a) \to (y\land g) \right) \bullet (a', g') \\ &=
\left( x \cdot (a,g) \cdot y \right) \bullet (a', g'). 
\end{align*}
\end{itemize}
We have shown that merging is a $\balg$-bilinear map $\andmon(\balg), \andmon(\balg) \to \andmon(\balg)$. Now suppose $N$ is a $\balg$-bimodule and that $f\colon \andmon(\balg), \andmon(\balg) \to N$ is a $\balg$-bilinear map.
Let $e \coloneq (1_\balg,1_\balg) \in \andmon(\balg)$.
Define the map $\hat f \colon \andmon(\balg) \to \andmon(\balg)$ by
$$
\hat f (\cont) = f(e, \cont).
$$
Since $f$ is $\balg$-bilinear, $\hat f$ is $\balg$-linear. Let $(a,g), (a',g') \in \andmon(\balg)$. We observe that
\begin{align*}
f((a,g),(a',g')) &= f(a\cdot e\cdot g, (a',g')) = f( e, a\cdot(a',g')\cdot g) \\ &=
f( e, (a \land a', (a\land a') \to (g\land g'))) = f(e, (a,g)\bullet(a',g')) \\ &= \hat f ((a,g)\bullet(a',g')).
\end{align*}
This means that there exists a $\balg$-linear map $\hat f$ making the following diagram commute:
\begin{equation}\label{jhbdcgqjd}
\begin{tikzcd}
\andmon(\balg),\andmon(\balg) \arrow[r, "f"] \arrow[d, "(\cdot)\bullet(\cdot)"] & N \\
\andmon(\balg) \arrow[ur, "\hat f"']
\end{tikzcd}
\end{equation}
Suppose there is a second $\balg$-linear map $h\colon \andmon(\balg) \to \andmon(\balg)$ such that
$$
\hat f (\cont \bullet \cont') = h (\cont \bullet \cont')
$$
for all $\cont, \cont' \in \andmon(\balg)$. By setting $\cont' = e$ in this expression, we obtain that $\hat f = h$. It follows that there is a unique map $\hat f$ making \eqref{jhbdcgqjd} commute. The statement of the theorem follows.
\end{proof}

\subsection{The tensor product of disjunctive monoids of contracts}
\label{khjkbcd}

In this section, we consider the disjunctive contract monoid functor $\ormon \colon \boolcat \to \moncat$ (see Proposition 6.10.1 of \cite{Incer:EECS-2022-99}) defined as follows:
\begin{align*}
    &\balg \xmapsto{\ormon} \left( |\calg(\balg)|, \lor, 0 \right)
    &&f \xmapsto{\ormon} f\times f.
\end{align*}
Proposition 6.10.2 of \cite{Incer:EECS-2022-99} shows that $\ormon(\balg)$ and $\andmon(\balg)$ are isomorphic \emph{as monoids} via the map $(\cdot)^{-1} \colon \calg(\balg) \to \calg(\balg)$ defined by
$$
(a,g)^{-1} = (g,a).
$$
Let $\cont = (a,g) \in \ormon(\balg)$ and $\cont' = (g,a) \in \andmon(\balg)$. There are unique left and right actions of $\boolandmon(\balg)$ on $\ormon(\balg)$ that make $(\cdot)^{-1}$ a linear map:
\begin{align*}
x\cdot (a,g) &=
x \cdot \cont =
x \cdot (\cont')^{-1} \coloneq
\left(x\cdot \cont'\right)^{-1} =
\left(x\cdot (g,a) \right)^{-1} \\ &=
\left( x\land g, x \to a \right)^{-1} =
(x \to a, x\land g) \\
(a,g)\cdot x &=
\cont \cdot x =
(\cont')^{-1} \cdot x \coloneq
(\cont' \cdot x)^{-1} =
((g,a) \cdot x)^{-1} \\ &=
(g,g \to (x\land a))^{-1} =
(g \to (x\land a), g).
\end{align*}
This means that $(\cdot)^{-1}$ is an isomorphism of $\ormon(\balg)$ and $\andmon(\balg)$ \emph{as $\balg$-bimodules}. Now we can prove our main result.

\begin{theorem}
    $\ormon(\balg)$ together with the map $(\cdot)\parallel(\cdot)\colon \ormon(\balg), \ormon(\balg) \to \ormon(\balg)$ given by \eqref{vdhqgngbq} is the tensor product $\ormon(\balg) \otimes \ormon(\balg)$.
    \end{theorem}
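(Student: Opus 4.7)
The plan is to transport the tensor product structure from $\andmon(\balg)$ to $\ormon(\balg)$ along the bimodule isomorphism $(\cdot)^{-1}\colon \ormon(\balg)\to \andmon(\balg)$ established at the end of Section~\ref{khjkbcd}. The key observation, which I would verify by a direct calculation on the closed forms \eqref{bkjfgbkjf} and \eqref{vdhqgngbq}, is the duality identity
$$
\cont \parallel \cont' = \bigl(\cont^{-1} \bullet (\cont')^{-1}\bigr)^{-1};
$$
that is, composition on $\ormon(\balg)$ is the image of merging on $\andmon(\balg)$ across $(\cdot)^{-1}$.

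Combined with the bimodule isomorphism property of $(\cdot)^{-1}$, this identity lets $\balg$-bilinearity of composition be read off from $\balg$-bilinearity of merging (established in the preceding theorem): distributivity $(\cont_1 \lor \cont_2) \parallel \cont_3 = (\cont_1 \parallel \cont_3) \lor (\cont_2 \parallel \cont_3)$ follows from distributivity of $\bullet$ over $\land$ together with the fact that $(\cdot)^{-1}$ swaps $\land$ and $\lor$, and the action condition is analogous.

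For the universal property, let $N$ be any $\balg$-bimodule and $f\colon \ormon(\balg), \ormon(\balg) \to N$ be $\balg$-bilinear. Define $\tilde f(\cont, \cont') \coloneq f(\cont^{-1}, (\cont')^{-1})$; this is $\balg$-bilinear on $\andmon(\balg)$ because $(\cdot)^{-1}$ is a bimodule isomorphism. The preceding theorem produces a unique $\balg$-linear $\hat g\colon \andmon(\balg) \to N$ with $\hat g(\cont \bullet \cont') = \tilde f(\cont, \cont')$. Setting $\hat f \coloneq \hat g \circ (\cdot)^{-1}$ yields a $\balg$-linear map $\ormon(\balg)\to N$ satisfying $\hat f(\cont \parallel \cont') = \hat g(\cont^{-1} \bullet (\cont')^{-1}) = \tilde f(\cont^{-1}, (\cont')^{-1}) = f(\cont, \cont')$. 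Uniqueness of $\hat f$ follows because any competitor $h$ would give $h\circ (\cdot)^{-1}$ as a competitor to $\hat g$, forcing agreement by the previous theorem's uniqueness clause.

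Since the whole argument reduces to a one-line duality identity plus transport across a known bimodule isomorphism, there is no genuine obstacle. A fully self-contained alternative mirrors the preceding theorem directly: the element $e \coloneq (1_\balg, 1_\balg)\in \ormon(\balg)$ admits the factorization $(a, g) = g \cdot e \cdot a$ under the actions defined in Section~\ref{khjkbcd}---note the reversal of roles of $a$ and $g$ compared to $\andmon(\balg)$---and plugging this into a bilinear $f$ gives $f((a,g),(a',g')) = \hat f((a,g)\parallel(a',g'))$ exactly as in the preceding proof.
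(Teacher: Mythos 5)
Your proposal is correct and follows essentially the same route as the paper: transport the universal property from $\andmon(\balg)$ along the bimodule isomorphism $(\cdot)^{-1}$, using the duality identity $\cont \parallel \cont' = \bigl(\cont^{-1}\bullet(\cont')^{-1}\bigr)^{-1}$ to conclude that the induced linear map factors $f$ through $(\cdot)\parallel(\cdot)$. Your explicit check that $\parallel$ is itself $\balg$-bilinear, and the alternative direct factorization $(a,g)=g\cdot e\cdot a$, are welcome additions that the paper leaves implicit, but the core argument is the same.
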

\begin{proof}
Consider the $\balg$-bimodule $N$ and the $\balg$-bilinear map $f\colon \ormon(\balg), \ormon(\balg) \to N$.
Since $(\cdot)^{-1}$ is an isomorphism, for every such map $f$, there exist
a unique $\balg$-bilinear map
$\tilde f\colon \andmon(\balg), \andmon(\balg) \to N$ such that
$$
f (\cont, \cont') = \tilde f\left(\cont^{-1},\left( \cont'\right)^{-1}\right).
$$
Per Proposition~\ref{kwjbfgc}, there is a unique 
$\balg$-linear map $\hat f \colon \andmon(\balg) \to N$ such that
$$
\tilde f (\cont, \cont') = \hat f (\cont \bullet \cont').
$$
Per the isomorphism $(\cdot)^{-1}$, there is a unique $\balg$-linear map $\breve f\colon \ormon(\balg) \to N$ such that
$$
\hat f = \breve f (\cdot)^{-1}.
$$
Putting this reasoning together, we have shown that for every $\balg$-bilinear map $f$, there is a unique $\balg$-linear map $\breve f$ such that
\begin{align*}
f(\cont, \cont') &=
\breve f \left(\left( \cont^{-1} \bullet \left( \cont'\right)^{-1} \right)^{-1}\right) =
\breve f \left( \cont \parallel \cont' \right),
\end{align*}
where the second equality is due to the duality of merging and composition---see Section 6.4 of \cite{Incer:EECS-2022-99}. In other words, for every $\balg$-bilinear map $f$, there is a unique $\balg$-linear map $\breve f$ making the following diagram commute:
\begin{equation*}
\begin{tikzcd}
\ormon(\balg),\ormon(\balg) \arrow[r, "f"] \arrow[d, "(\cdot)\parallel(\cdot)"] & N \\
\ormon(\balg) \arrow[ur, "\breve f"']
\end{tikzcd},
\end{equation*}
which proves the theorem.
\end{proof}

\section{Discussion and concluding remarks}
\label{knjgbksjgbj}

We have shown that the operations of merging and composition of contracts, which up to now have been defined by axiom, belong naturally in the tensor product structure of contracts.
The diagram we presented in Section~\ref{jkhbgbf}---showing all known binary contract operations---can now be recast as shown below. The diagonals of the bottom square correspond to the preheap\footnote{Preordered heaps, or preheaps, are studied in \cite{EPTCS326.14}.} identities described in Section 6.9 of \cite{Incer:EECS-2022-99}.

\newcommand{\boxlength}[0]{0.15\textwidth}
$$
{
{
    \begin{tikzcd}[row sep = 0.2em, column sep = tiny,ampersand replacement=\&]
    \& \text{Implication } \to \arrow[ddd, "\,\,\text{Right adjoint}\,\,", Leftarrow] 
    \arrow[rrr, "\text{Dual}", Leftrightarrow]
    \& [2.9em] \& \& \text{{{Coimplication}} } \nrightarrow \\ 
    \text{ } \& \& \\
    \\
    \text{Order} 
    \arrow[r, Rightarrow]
    \& \text{Conjunction } \land \arrow[rrr, "\text{Dual}", Leftrightarrow] \arrow[ddd, "\,\,\text{Tensor product}\,\,", Rightarrow] \& \& \&
    \text{{{Disjunction}} } \lor \arrow[uuu, "\,\,\text{Left adjoint}\,\,", Rightarrow] 
    \arrow[ddd, "\,\,\text{Tensor product}\,\,"', Rightarrow]
    \\
    \& \& \& \& \text{}
    \\
    \& \& \& \& \text{}
    \\
    \& \mathmakebox[\boxlength]{\text{Merging} \bullet \text{\;}} \arrow[ddd, "\,\,\text{Left adjoint}\,\,"', Rightarrow] 
    \arrow[rrr, "\text{Dual}", Leftrightarrow]
    \arrow[dddrrr, "\text{Preheap}" {near start, sloped, font=\tiny}, Leftrightarrow]
    \& [1.5em] \& \& \mathmakebox[\boxlength]{\text{\;\; {{Composition}} } \parallel} 
    \arrow[dddlll, "\text{Preheap}" {near start, sloped, font=\tiny}, Leftrightarrow, crossing over]
    \\ 
    \text{} 
    \& \& \\
    \\
    \& \mathmakebox[\boxlength]{\text{{{Separation}} } \div \text{\;}} \arrow[rrr, "\text{Dual}"', Leftrightarrow] \& \& \&
    \mathmakebox[\boxlength]{\text{{{Quotient}} } /} \arrow[uuu, "\,\,\text{Right adjoint}\,\,"', Leftarrow]
    \end{tikzcd}
}}$$


\subsection*{Acknowledgements}

This work was supported by ASEE and NSF through the eFellows postdoctoral program.

\bibliographystyle{style/splncs04}
\bibliography{support/references}

\appendix

\end{document}